\newtheorem{theorem}{Theorem}[section]
\newtheorem{lemma}[theorem]{Lemma}
\def\w{{\bf w}}
\def\s{{\bf s}}
\def\h{{\bf h}}
\def\x{{\bf x}}
\def\x{{\mathbf x}}
\title{Optimal non-coherent data detection for massive SIMO wireless systems: A polynomial complexity solution}
\begin{document}
\ninept
\renewcommand{\baselinestretch}{1}
\maketitle
\begin{abstract}
 Massive MIMO systems have made significant progress in increasing spectral and energy efficiency over traditional MIMO systems by exploiting large antenna arrays.  In this paper we consider the joint maximum likelihood (ML) channel estimation and data detection problem for massive SIMO  (single input multiple output) wireless systems. Despite the large number of unknown channel coefficients for massive SIMO systems, we improve an algorithm to achieve the exact ML non-coherent data detection with a low expected complexity. We show that the expected computational complexity of this algorithm is linear in the number of receive antennas and polynomial in channel coherence time. Simulation results show the performance gain of the optimal non-coherent data detection with a low computational complexity.

\end{abstract}
\begin{keywords}
ML detection, channel estimation, massive SIMO,
maximum likelihood, sphere decoder
\end{keywords}
\section{Introduction}
\label{sec:intro}
Using multiple-antenna arrays has been well known  for its benefits: high reliability, high spectral efficiency and interference reduction. Recently a new approach, \textit{massive} MIMO, has emerged by equipping communication terminals  with a huge number of antennas. This reaps the benefits of the traditional MIMO systems on a much larger scale. \cite{Thomas} mathematically showed that the effect of fast fading and non-correlated noise is eliminated as the number of receive antennas approaches infinity. Since then, extensive research interests have been generated in massive MIMO. For example, massive MIMO systems' information-theoretic and propagation aspects are discussed in \cite{ScalingMIMO,MUMIMO}. Research on massive MIMO has also focused on many other aspects, including  transmit and receive schemes, the effect of pilot contamination,  energy efficiency,  and channel estimation for massive MIMO systems, as overviewed in \cite{Challenges,NextGeneration}.

 Knowledge of the channel state information (CSI) is required to achieve the advantages of massive MIMO systems \cite{Challenges}. However, accurately estimating the channel gains in wireless systems is a big challenge, especially in fast fading environments \cite{HowmanyAntenna}. In case of conventional MIMO systems, differential modulation techniques, blind and semi-blind, and pilot based algorithms are used to solve the problem of channel tracking \cite{Stoica, HarisConference, Ma, Swindlehurst, Manton}. Although these algorithms have improved the performance of non-coherent MIMO systems, they are not optimal for massive time-varying channels.  Compared with traditional MIMO systems,  it is even more challenging to perform accurate channel state estimation for massive MIMO systems, considering massive MIMO's large number of unknown channel coefficients. It is of great theoretical and practical interest to investigate near-optimal or optimal non-coherent data transmission and data detection schemes for massive MIMO systems \cite{NextGeneration}.

In this paper, we consider the problem of joint ML channel estimation and data detection for massive SIMO systems. An extensive list of works have addressed non-coherent data detection problems for conventional SIMO wireless systems or wireless systems in general. Most existing efficient MIMO non-coherent signal detection algorithms are suboptimal compared with the exact ML algorithms.  However, there are a few exceptions. For instance, sphere decoder algorithm was used in \cite{Hassibi} to solve the joint ML non-coherent problem for SIMO wireless systems. Sphere decoder algorithm reduces the computational complexity by restricting the ML detection search to a subset of the signal space.   \cite{Ma} also used sphere decoder algorithms to achieve the ML channel estimation and data detection for orthogonal space time block coded (OSTBC) wireless systems. In \cite{Hassibi} and \cite{Ma}, the sphere decoder algorithm has been shown as an exact ML non-coherent detection algorithm which has a lower complexity than the exhaustive search,  but the sphere decoder works only for constant-modulus  constellations. \cite{Weiyu} proposed an exact joint ML channel estimation and signal detection algorithm for SIMO systems with general constellations.  \cite{OFDMBlind} proposed an exact ML channel estimation and data detection for OFDM wireless systems with general constellations.  An ML non-coherent signal detection algorithm for OSTBC was developed in \cite{Dimitris} for constant-modulus constellations. The algorithm proposed in \cite{Dimitris} uses recent results on efficient maximization of reduced-rank quadratic form to achieve a polynomial complexity.

 The optimal non-coherent data detection algorithms from \cite{Hassibi} and \cite{Weiyu} did not look at the non-coherent data detection complexity as the number of receive antennas grows large in massive SIMO systems. Furthermore, the algorithm in \cite{Dimitris} gives an exact ML solution only when the matrix in the related quadratic form optimization problem has low rank, but  this  low-rank assumption does not hold for massive SIMO systems with a large number of receive antennas.  Without efficient algorithms achieving optimal non-coherent data detection for massive MIMO systems, it was not known how suboptimal non-coherent data detection methods compare with the ML non-coherent data detection methods. It was believed that the goal of achieving joint ML channel estimation and data detection is even more difficult for massive MIMO systems, because of a large number of unknown channel coefficients \cite{ScalingMIMO}.

In this paper we study and improve a joint ML channel estimation and data detection algorithm for massive SIMO systems. Surprisingly, despite a large number of unknown channel coefficients for massive SIMO systems, this algorithm achieves the exact ML non-coherent data detection with a low expected complexity. We theoretically show that the expected computational complexity of the algorithm is linear in the number of receive antennas and polynomial in channel coherence time. Simulation results demonstrate the performance gain of our optimal non-coherent data detection with a low computational complexity. To the best of our knowledge, for the first time, we have demonstrated the exact performance gap between the optimal non-coherent data detection algorithm, and suboptimal non-coherent data detection algorithms, for massive SIMO systems.

The rest of this paper is organized as follows. Section \ref{sec:problem} sets up the system model and presents the ML non-coherent data detection algorithm. Section \ref{sec:complexity} derives the expected complexity of the algorithm. Simulation results are provided and discussed in Section \ref{sec:simulation}. Section \ref{sec:conclusion} concludes our paper and highlights our contribution.

\section{The JOINT CHANNEL ESTIMATION AND SIGNAL DETECTION PROBLEM}
\label{sec:problem}
Let $T$ denote the length of a data packet during which the channel remains constant. The channel output for a SIMO system with $N$ receive antennas is given by
\begin{equation}\label{system}
X=\h\s^*+W,
\end{equation}
where $\h \in \mathcal{C}^{N \times 1}$ is the SIMO channel vector,
$\s^* \in \mathcal{C}^{1 \times T}$ is the transmitted symbol
sequence, and $W \in \mathcal{C}^{N \times T}$ is an additive noise
matrix whose elements are assumed to be i.i.d. complex Gaussian
random variables. We also assume the entries of $\s^*$ are
i.i.d. symbols from a certain constant-modulus constellation $\Omega$ (such as BPSK or QPSK) which has unit expected energy, i.e.,
\begin{equation}
E(|\mathbf{s}_{k}|^2)=1, k=1,2,...,T. \label{Power}
\end{equation}
We assume $\h$ as a deterministic unknown channel with no priori information known about it \cite{Stoica}\cite{Ma}.
Then, the joint ML channel estimation and data detection problem for SIMO systems is given by the following mixed optimization problem \vspace{-0.05in}
\begin{equation}
\min_{\h, \s^* \in \Omega^T}\| X-\h\s^*\|^2, \label{eq:mixed}
\end{equation}
where $\Omega^T$ denotes the set of $T$-dimensional signal vectors. From \cite{Hassibi}, the optimization to (\ref{eq:mixed}) over $\h$ is a least square problem while the optimization over $\s^*$ is an integer least square problem, since each elements of $\s^*$ is chosen from a fixed constellation $\Omega$. By \cite{HarisConference}, for any given symbol vectors $\s^*$, the channel vector $\h$ that minimizes (\ref{eq:mixed}) is
\begin{equation}
\hat{\h}=X\s (\s^*\s)^{-1}=X\s/\|\s\|^2, \label{eq:opth}
\end{equation}
Substituting (\ref{eq:opth}) into (\ref{eq:mixed}), we get
\begin{equation}
\|  X(\underbrace { I-\frac{1}{\|\s\|^2}\s\s^*) }_{=P_{\s}}
\|^2=\text{tr}(XP_{\s}X^*
)=\text{tr}(XX^*)-\frac{1}{\|\s\|^2}\s^*X^*X\s, \label{eq:optmetric}
\end{equation}
As pointed ou t in \cite{HarisConference}, if the modulation constellation is of constant modulus (such as
QPSK), the minimization of (\ref{eq:optmetric}) over $\s^*$ is
equivalent to solve the following problem:
\begin{equation}
\max_{\s^* \in \Omega^T}\s^*X^*X\s, \label{eq:max}
\end{equation}
The quadratic form in (\ref{eq:max}) for a constant modulus modulation can be changed into an equivalent minimization problem by using the maximum eigenvalue of $X^*X$. Thus, (\ref{eq:max}) can be represented as
\begin{equation}
\min_{\s \in \Omega^T}\s^*\underbrace{(\rho I-\frac{X^*X}{N}}_{=\Im})\s, \label{eq:min}
\end{equation}
where $\rho$ is a slightly larger value than the maximum eigenvalue of $\frac{X^*X}{N}$. The traditional solution of an integer least square optimization problem in (\ref{eq:min}) is by using exhaustive search over the entire lattice. However, the computational complexity of exhaustive search is exponential in $T$. Sphere decoder was used in \cite{HarisConference} to efficiently solve (\ref{eq:min}) with a lower computational complexity than exhaustive search. Instead of searching over all the hypotheses in the lattice, sphere decoder attempts to look at the lattice points within a radius $r$. As a result, the searching process of sphere decoder only visits the sequences that are inside the hypersphere of radius $r$ 
\begin{equation}
\s^*(\rho I-\frac{X^*X}{N})\s \leq r^2. \label{eq:spheresearch}
\end{equation}

From the way in which $\rho$ is determined, the matrix $\Im$ in (\ref{eq:min}) is positive semidefinite. We can use the Cholesky decomposition to factorize $\Im$ as
\begin{equation}
\Im=R^*R,
\end{equation}
where $R$ is an upper triangular matrix. Now we can rewrite (\ref{eq:min}) as
\begin{align}
\label{eq:min10}
  \min_{\s^* \in \Omega^T}\s^*(\rho I-\frac{X^*X}{N})\s &= \min_{\s^* \in \Omega^T}\s^*R^*R\s\notag\\
&=\min_{\s^* \in \Omega^T}\|\ R \s \|^2.
\end{align}
Since $R$ is an upper triangular matrix, $R\s$ can be expanded as
\begin{equation}\label{eq:Metric}
M_{\s^*}=\sum^T_{i=1} \|\sum_{k=i}^{T} L_{i,k} \s_k\|^2,
\end{equation}
where $M_{\s^*}$ is the metric of the transmitted vector $\s^*$, and $L_{i,k}$ is the entry of $R$ in the $i$-th row and $j$-th column. 
For each $i$ between $1$ and $T$, we further define
\begin{equation}\label{eq:Mertici1}
 M_{\s^*_{i:T}}= \|\sum_{k=i}^{T} L_{i,k} \s_k\|^2+M_{\s^*_{i+1:T}},
\end{equation}
where the partial sequence $\s^*_{i:T}$ consist of elements $\s^*_i$, $\s^*_{i+1}$, ..., $\s^*_{T}$, $M_{\s^*_{i:T}}$ is the metric of the partial sequence $\s^*_{i:T}$, and $M_{\s^*_{T+1:T}}=0$ by default. If the set of possible data sequences are represented in a tree structure as in \cite{HarisConference}, we refer to $\s^*_{i:T}$ as a layer-$i$ node in the tree. Now we present the algorithm from \cite{HarisConference} for joint ML channel estimation and data detection.

\noindent \emph{\textbf{Joint ML channel estimation data detection algorithm}} \\
Input: radius $r$, matrix $R$, constellation $\Omega$ and a $1 \times T$
index vector $I$
\begin{enumerate}
\item Set $i=T$, $r_{i}=r$, $I(i)=1$ and set $\s^*_{i}=\Omega(I(i))$.
\item (Computing the bounds) Compute the metric $M_{\s^*_{i:T}}$. If
$M_{\s^*_{i:T}}>r^2$, go to 3; else, go to 4;
\item (Backtracking) Find the smallest $i\leq j \leq T$ such
that $I(j)<|\Omega|$. If there exists such $j$, set $i=j$ and go to
5; else go to 6.

\item If $i=1$, store current $\s^*$, update $r^2=M_{\s^*_{i:T}}$ and go to 3; else set $i=i-1$, $I(i)=1$ and
$\s^*_{i}=\Omega(I(i))$, go to 2.

\item  Set $I(i)=I(i)+1$ and $s^*_{i}=\Omega(I(i))$.
Go to 2.

\item If any sequence $\s^*$ is ever found in Step 4, output the latest
stored full-length sequence as the ML solution; otherwise, double $r$
and go to 1.\\
\end{enumerate}

In our analysis of this algorithm for massive SIMO systems, we will slightly change the algorithm in the last step: if no sequence is ever found in Step 4, we will increase $r$ to $\infty$.

\subsection{Choice of radius $r$}
The choice of the radius $r$ has a big influence on the complexity of this ML algorithm. If $r^2$ is chosen bigger than the metric of any sequence $\tilde{\s}\in|\Omega|^T$, the ML algorithm will visit all the tree nodes under that radius. If $r^2$ is too small, then the ML sequence may be outside the search radius, and the ML algorithm will have to search again under a new larger radius.

 \cite{HarisConference, SphereComplexity} derived how to choose $r$ such that with a certain probability, the transmitted sequence has a metric no bigger than $r^2$. However, the radius choice in \cite{HarisConference} is for a fixed number of receive antennas, and  for high signal-to-noise ratio (SNR).

In this paper, we quantify the choice of radius $r$ when the number of receive antennas is big, as in massive MIMO systems. In fact we set $r^2$ as any constant $c$ such that
$$r^2 =c<  \frac{T}{2}.$$

We remark that this radius choice is different from \cite{HarisConference}.  More specifically, the new radius value does not depend on the SNR or the number of receive antennas.  In fact, one can choose the radius of $r^2$ to be a positive constant arbitrarily close to 0, for a large SIMO system. In the next section, we will show that, under this new radius, the joint ML channel estimation and data detection algorithm has very low computational complexity.

\section{Algorithm computational complexity}
\label{sec:complexity}
The computational complexity of the joint ML channel estimation and data detection algorithm for SIMO systems is mainly determined by the number of visited nodes in each layer. By ``visited nodes'', we mean the partial sequences $\s^*_{i:T}$ for which $M_{\s^*_{i:T}}$ is computed in the algorithm. The fewer the visited nodes, the lower computational complexity the joint ML algorithm needs. In this section, we will show that the number of visited points in each layer will converge to a constant number for a sufficient large number of receive antennas.

\begin{theorem}
In the joint maximum-likelihood joint channel estimation and data detection algorithm, the expected number of visited points at layer $i$  with $N$ receive antennas converges to $|\Omega|$ for $i\leq (T-1)$, as $N$ goes to infinity. The joint ML algorithm only visits one tree node at layer $i=T$.
\label{thm:ltt1}
\end{theorem}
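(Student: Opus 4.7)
The plan is to exploit the large-$N$ concentration of $X^{*}X/N$ and show that in the limit only ``rotations'' $\omega\s_{i:T}$ of the true symbol sequence pass the sphere-decoder pruning test at any layer; the claimed visit counts then follow from a deterministic analysis of the depth-first search on this collapsed set of surviving partial sequences.

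First I would show that as $N\to\infty$ the strong law of large numbers gives
\[\frac{X^{*}X}{N}\longrightarrow P_{h}\,\s\s^{*}+\sigma^{2}I_{T}\quad\text{a.s.},\]
where $P_{h}=\lim_{N}\|\h\|^{2}/N>0$: expanding $X=\h\s^{*}+W$, the cross terms $\s\h^{*}W/N$ vanish by LLN and $W^{*}W/N\to\sigma^{2}I_{T}$. Choosing $\rho$ slightly above the largest eigenvalue of $X^{*}X/N$, the matrix $\Im=\rho I-X^{*}X/N$ converges to $\Im_{\infty}=(P_{h}T+\delta)I-P_{h}\s\s^{*}$ for an arbitrarily small margin $\delta>0$, a strictly PD rank-one perturbation of a scaled identity; continuity of the Cholesky map on the PD cone gives $R\to R_{\infty}$. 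Using the identity $R_{i:T,i:T}^{*}R_{i:T,i:T}=\bigl[(\Im^{-1})_{i:T,i:T}\bigr]^{-1}$ together with the Sherman--Morrison inverse of $\Im_{\infty}$, a short computation gives, with $m=T-i+1$ and an explicit positive coefficient $c_{i}$,
\[M_{\tilde\s^{*}_{i:T}}\longrightarrow (P_{h}T+\delta)m - c_{i}\,\bigl|\tilde\s_{i:T}^{*}\s_{i:T}\bigr|^{2}.\]
Since $|\tilde\s_{i:T}^{*}\s_{i:T}|\le m$ with equality iff $\tilde\s_{i:T}=\omega\s_{i:T}$ for some $|\omega|=1$, this limit is $O(\delta)$ on rotations and bounded below by a positive constant on every non-rotation. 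Taking $\delta$ small enough so that the rotation value falls below the chosen $r^{2}$ while the non-rotation floor exceeds $r^{2}$ (permissible because $r^{2}<T/2$ may be any positive constant and $\delta$ is at our disposal) forces, for sufficiently large $N$, exactly the $|\Omega|$ rotations $\{\omega\s_{i:T}:\omega\in\Omega\}$ to survive the pruning test at each layer $i\le T-1$.

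Finally I would convert this structural fact into the visit counts. At each layer $i\le T-1$ the depth-first search examines the $|\Omega|$ children of its currently active passing ancestor, among which exactly one -- the one whose leading symbol matches the ancestor's rotation factor -- itself survives; this gives $|\Omega|$ visited partial sequences per layer. At layer $T$ the partial metric $|L_{T,T}|^{2}$ is independent of $\tilde\s_{T}$ by the constant-modulus assumption, so only the first candidate is examined before the algorithm descends; once the first full-length rotation updates $r^{2}$ to the (asymptotically vanishing) true-sequence metric, the algebraic identity $|L_{T,T}|^{2}\le\|R\s\|^{2}$ together with the strict gap established for non-rotations prevents any further strict pass at layer $T$, yielding a single layer-$T$ visit in the limit. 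Almost-sure convergence of the visit count is then upgraded to convergence of expectations by dominated convergence, using the trivial a priori bound $|\Omega|^{T-i+1}$ on the number of layer-$i$ nodes.

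The main obstacle I anticipate is the separation bound in the second paragraph: establishing, uniformly in $i$ and in $\tilde\s_{i:T}$, that non-rotation partial metrics strictly exceed $r^{2}$ in the limit. The delicate regime is $m=T$ (layer $1$), where a non-rotation can differ from a rotation in only a single symbol so that the defect $m^{2}-|\tilde\s_{i:T}^{*}\s_{i:T}|^{2}$ is smallest; one must verify that the coefficient $c_{i}$ grows fast enough with $m$ to compensate. This ultimately reduces to a constellation-dependent lower bound $\min\{m^{2}-|\tilde\s_{i:T}^{*}\s_{i:T}|^{2}:\tilde\s_{i:T}\in\Omega^{m}\setminus\Omega\cdot\s_{i:T}\}>0$, which is immediate for BPSK and QPSK and, more generally, for any constant-modulus constellation.
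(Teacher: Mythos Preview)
Your overall strategy---show that $X^{*}X/N$ concentrates, pass to the limiting quadratic form, and count surviving nodes there---is the same as the paper's, and your Schur--complement/Sherman--Morrison route to the closed form for $M_{\tilde\s^{*}_{i:T}}$ is a clean alternative to the paper's explicit Cholesky factor $\grave R$.

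The genuine gap is your treatment of layer $T$, and it propagates to every other layer. The paper's mechanism for ``only one node at layer $T$'' is that $\s_T$ is a \emph{predetermined pilot symbol}; the algorithm never enumerates alternatives there. You do not invoke the pilot and instead claim that after the first full-length sequence is found and $r^{2}$ is shrunk to $\|R\s\|^{2}$, the remaining layer-$T$ candidates are pruned. In your own limit this is false: the layer-$T$ metric is
\[
|L_{T,T}|^{2}=\frac{(P_{h}T+\delta)\,\delta}{P_{h}+\delta},
\qquad
r^{2}=\s^{*}\Im_{\infty}\s=\delta T,
\]
and for $T\ge 2$ one has $|L_{T,T}|^{2}<\delta T$ strictly. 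Since the algorithm prunes only when the partial metric \emph{exceeds} $r^{2}$, every one of the $|\Omega|$ layer-$T$ candidates still passes after the radius update, the DFS descends along each rotation branch $\omega\s$, and you end up visiting $|\Omega|$ nodes at layer $T$ and $|\Omega|^{2}$ nodes (not $|\Omega|$) at every layer $i\le T-1$.

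Once you restore the pilot, the whole rotation apparatus becomes unnecessary: fixing $\tilde\s_T=\s_T$ forces $\omega=1$, so the only surviving partial sequence at each layer is the transmitted $\s_{i:T}$ itself. The separation bound you flag as the main obstacle then collapses to the elementary statement that $|L_{i,i}(\tilde\s_i-\s_i)|^{2}>0$ whenever $\tilde\s_i\neq\s_i$ and $i\le T-1$, which is precisely how the paper closes the argument via its lemmas on the diagonal entries $L_{i,i}$.
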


\begin{proof}[Proof of Theorem \ref{thm:ltt1}]
The number of visited nodes at layer $i$ ($1\leq i \leq T-1$) in the joint ML algorithm is equal to $|\Omega|$,  if there is one and only one tree node $\widetilde{\s}^*_{(i+1):T}$ such that $M_{\widetilde{\s}^*_{(i+1):T}} \leq r^2$. In fact, we will prove that, the transmitted  $\s^*_{(i+1):T}$ will be the only sequence satisfying $M_{\widetilde{\s}^*_{(i+1):T}} \leq r^2$, with high probability as the number of receive antennas $N \rightarrow \infty$. To prove this, we first show this conclusion is true for the average case with $\Im_{E}=\rho_{E} I-\frac{E[X^*X]}{N}$, where $\rho_{E}$ is the maximum eigenvalue of $\frac{E[X^*X]}{N}$. Then we use the concentration results for $\frac{X^*X}{N}$ to prove  that, for $\Im=\rho I-\frac{E[X^*X]}{N}$,  the transmitted  $\s^*_{(i+1):T}$ will also be the only sequence satisfying $M_{s^*_{(i+1):T}} \leq r^2$.

For the average case, we first derive $E[X^*X]$, and factorize $\rho_{E} I-\frac{E[X^*X]}{N}$ using the Cholesky decomposition. Using the upper triangular matrix generated from the Cholesky decomposition, we show that the transmitted $\s^*_{(i+1):T}$ will be the only sequence satisfying $M_{\s^*_{(i+1):T}} \leq r^2$ under $\Im=\rho_{E} I-\frac{E[X^*X]}{N}$.

We can write (\ref{system}) as
\begin{align}
   [\x_{1} \;  \x_{2} \;  \cdot \;  \cdot \;  \x_{T}] &= [\s^*_{1}\h \; \s^*_{2}\h \; \cdot  \; \cdot \; \s^*_{T}\h]+[\w_{1} \; \w_{2} \; \cdot \; \cdot \; \w_{T}]\notag\\
   &= [\s^*_{1}\h+\w_{1} \;\; \s^*_{2}\h+\w_{2} \;\; \cdot  \;\; \cdot \;\; \s^*_{T}\h+\w_{T}],\notag\\ \nonumber
\end{align}
where $\x_{i}$ is the $i$-th column vector of $X$. Then $E[X^*X]$ is equal to
\begin{equation}
E \left\{\begin{bsmallmatrix}(\s^*_{1}\h+\w_{1})^* \\ (\s^*_{2}\h+\w_{2})^* \\ \vdots \\ (\s^*_{T}\h+\w_{T})^*\end{bsmallmatrix}
\begin{bsmallmatrix} (\s^*_{1}\h+\w_{1}) & (\s^*_{2}\h+\w_{2}) & \cdots & (\s^*_{T}\h+\w_{T}) \end{bsmallmatrix} \right\}\notag\\ \nonumber.
\end{equation}
Since the entries of $\h$ are independent complex Gaussian random variables with unit variance and zero mean, $E[\h^* \h]= E[\sum^N_{i=1} h_{i}^*h_{i}]
=N$.  After some algebra,  we can rewrite $E[X^*X]/N$ as
\begin{equation}
\begin{bmatrix}
	\s_{1}\s^*_{1}+\sigma^2_{w} & \s_{1}\s^*_{2} & \cdots & \s_{1}\s^*_{T} \\
    \s_{2}\s^*_{1} & \s_{2}\s^*_{2}+\sigma^2_{w} & \cdots & \s_{2}\s^*_{T} \\
    \vdots & \vdots & \ddots & \vdots \\
    \s_{T}\s^*_{1} & \s_{T}\s^*_{2}  &  \cdots & \s_{T}\s^*_{T}+\sigma^2_{w} \\
    \end{bmatrix} \label{X*X}.
\end{equation}

Obviously (\ref{X*X}) is a Hermitian matrix with a full column rank. The maximum eigenvalue of $\frac{E[X^*X]}{N}$ is $\rho_{E}=T+\sigma^2_{w}$. Then we can write $A=\rho_{E} I- \frac{E[X^*X]}{N}$ as
\begin{equation}
A=
\begin{bmatrix}
			T-\s_{1}\s^*_{1} & -\s_{1}\s^*_{2} & \cdots & -\s_{1}\s^*_{T} \\
             -\s_{2}\s^*_{1} & T-\s_{2}\s^*_{2} & \cdots & -\s_{2}\s^*_{T}\\
             \vdots & \vdots & \vdots &\vdots \\
             -\s_{T}\s^*_{1} & -\s_{T}\s^*_{2}& \cdots & T-\s_{T}\s^*_{T} \\
             \end{bmatrix} \nonumber.
\end{equation}
We then decompose $(\rho_{E} I- \frac{E[X^*X]}{N})$ into $\grave{R}^*\grave{R}$ using the Cholesky decomposition in \cite{NumbericalMathmatics}. Then we have
 \begin{equation}
  \grave{R}=\begin{bmatrix} L_{1,1} & L_{1,2}  & L_{1,3}  & \cdot  & \cdot & \ L_{1,T}\\
                    0&L_{2,2}  &L_{2,3}  &\cdot  &\cdot  & L_{2,T}\\
                    0 & 0  &L_{3,3}  &\cdot  &\cdot  & L_{3,T}\\
                    0&0  &0 &\cdot &\cdot & L_{T,T} \notag\end{bmatrix},
 \end{equation}
where $L^*_{i,i}= \sqrt{a_{i,i}-\sum^{i-1}_{k=1} L_{k,i}L^*_{k,i}}$, $L^*_{i,j}=\frac{1}{L_{i,i}}( a_{j,i}-\sum^{i-1}_{k=1}L_{k,i} L^*_{k,j})$ for $1\leq i<j\leq T$, and $a_{i,j}$ is the entry of $(\rho_{E} I- \frac{E[X^*X]}{N})$ with row index $i$, and column index $j$. Thus $\grave{R}$ is given by (\ref{eq:Rmat}) (listed on the top of next page).
\begin{figure*}
	\begin{equation}
  		\grave{R}=\begin{bmatrix} \sqrt{T-1} & \frac{-(\s_{1}\s^*_{2})}{\sqrt{T-1}}  & \frac{-(\s_{1}\s^*_{3})}{\sqrt{T-1}}  & \cdots & \frac{-(\s_{1}\s^*_{T})}{\sqrt{T-1}}\\
                    0 & \sqrt{T-1 -\frac{1}{T-1}}  & \frac{1}{L_{2,2}} \left [-(\s_{2}\s^*_{3})- \frac{(\s_{2}\s^*_{3})}{T-1}\right]  & \cdots  & \frac{1}{L_{2,2}} \left [-(\s_{2}\s^*_{T})- \frac{(\s_{2}\s^*_{T})}{T-1}\right]\\
                    0 & 0  &\sqrt{T-1 -\frac{1}{T-1}-\frac{T}{(T-1)(T-2)}}  & \cdots  & \frac{1}{L_{3,3}} \left [-(\s_{3}\s^*_{T})- \frac{(\s_{3}\s^*_{T})}{T-1}-\frac{(\s_{3}\s^*_{T})T}{(T-1)(T-2)}\right]\\
                    \vdots & \vdots & \vdots & \ddots & \vdots \\
                    0      &   0    &   0    & \cdots & \sqrt{T-1 -\frac{1}{T-1}-\cdot-\frac{T}{(T-(T-2))(T-(T-1))}}\end{bmatrix}.
    \label{eq:Rmat}
	\end{equation}
\hrule
\end{figure*}

We can see that $L_{ii}=\sqrt{(T-1)-\sum_{j=1}^{i-1} \frac{T}{(T-(j-1))(T-j)}}$ for $1< i \leq T$. Now we can use $ \grave{R}$ in (\ref{eq:Rmat}) as the upper triangular matrix of Cholesky decomposition to solve the minimization equation in (\ref{eq:min10}). In fact, based on (\ref{eq:Metric}), the metric $M_{\s^*_{1:T}} (\grave{R})$ from (\ref{eq:min}) is
\begin{align}\label{eq:Mertici3}
M_{\s^*_{1:T}}=\s^*A\s &=\s^*(TI-\s^*\s)\s \notag\\
  &=T\s^*\s-\s^*\s^*\s\s\notag\\
  &=T^2-T^2 \notag\\
  &=0,
\end{align}
since $\s^* \s=T$. Because $M_{\s^*}=\sum^T_{i=1} \|\sum_{k=i}^{T} L_{i,k} \s_k\|^2$, from (\ref{eq:Mertici3}), we must have $\| \sum_{k=i}^{T} L_{i,k}$ $ \s_k\|^2=0$ for every $1\leq i \leq T$. This in turn implies that $M_{\s^*_{i:T}}=0$, and $\sum_{k=i}^{T} L_{i,k} \s_k=0$ for every $1\leq i \leq T$.  On the other hand, according to Lemma \ref{thm:ltt2}, for any other $\widetilde{\s}\neq \s$, $M_{\widetilde{\s}^*_{i:T}} \neq 0$,  where $i$ is the integer closest to $T$ such that $\s^*_i \neq \widetilde{\s}^*_{i}$.

When $i=T$,  the joint ML algorithm  will visit only $1$ tree node, namely $\s^*_{T}$, whose metric is equal to $0$,  because $\s^*_{T}$ is predetermined to resolve phase ambiguity; when $i<T$, at layer $i$, we also only have one sequence $\widetilde{\s}^*_{i:T}=\s^*_{i:T}$ such that $M_{\widetilde{\s}^*_{i:T}}= 0$. This will prove Theorem \ref{thm:ltt1}, under the assumption that $X^*X=E[X^*X]$.

Now we prove that, with high probability, $X^*X/N$ is close to $E[X^*X]/N$, and thus the expected number of visited nodes under $\rho I-\frac{X^*X}{N}$ is very close to  the number under $\rho_{E} I-\frac{E[X^*X]}{N}$. In fact, $\frac{(X^*X)_{i,j}}{N}$ can be written as the sum of independent random variables:
\begin{align}\label{eq:factarizationofX}
\frac{(X^*X)_{i,j}}{N}&=\frac{(\s^*_{i}\h+\w_{i})^*(\s^*_{j}\h+\w_{j})}{N}\notag\\
&=\frac{\sum\limits_{k=1}^{N}(\s^*_{i}h_k+w_{k,i})^*(\s^*_{j}h_{k}+w_{k,j})}{N}\notag\\
&=\s_{i}\s^*_{j}\frac{\sum^{N}_{k=1} h^*_{k}h_{k}}{N}+\frac{\sum^{N}_{k=1} w^*_{k,i} w_{k,j}}{N}\notag\\
&+\frac{\s_{i} \sum^{N}_{k=1} h^*_{k} w_{k,j}}{N}+\frac{\s^*_{j} \sum^{N}_{k=1} w^*_{k,i} h_{k}}{N}, \\
\end{align}
where $\w_{i}$ is the $i$-th column of $W$. Then we can find the expectation and the variance of (\ref{eq:factarizationofX}) as follows:
\begin{align}\label{eq:EXP}
E[\frac{(X^*X)_{i,j}}{N}]&= \s_{i}\s^*_{j}\frac{\sum^{N}_{k=1}  E(h^*_{k}h_{k})}{N}+\frac{\sum^{N}_{k=1} E(w^*_{k,i} w_{k,j})}{N}\notag\\
&+\frac{\s_{i} \sum^{N}_{k=1} E(h^*_{k} w_{k,j})}{N}+\frac{\s^*_{j} \sum^{N}_{k=1} E(w^*_{k,i} h_{k})}{N}, \notag\\
&=\begin{cases}
    1+\sigma^2_{w},  & \text{if } i= j\\
    \s_{i}\s^*_{j},  & \text{otherwise}
\end{cases}
\end{align}

$var(\frac{(X^*X)_{i,j}}{N})=\begin{cases}
    (1+2\sigma^2_{w}+\sigma^4_{w})/N,  & \text{if } i= j\\
    (2+2\sigma^2_{w}+\sigma^4_{w})/N,  & \text{otherwise}
\end{cases}$

The weak law of large numbers states that the sample mean of a random variable converges to its expectation in probability. Thus, for any pair $1\leq i,j \leq N$, for any $\xi>0$ and $\epsilon>0$, as $N$ goes to infinity, we have
\begin{equation}
P(|\frac{(X^*X)_{i,j}}{N}- \frac{E[(X^*X)_{i,j}]}{N}| \geq\varepsilon)\leq \xi.
\end{equation}

This means that, for any $\xi>0$ and $\epsilon>0$, as $N$ goes to infinity, we have
\begin{equation}
P(\|\frac{X^*X}{N}- \frac{E[X^*X]}{N}\|_F \leq \varepsilon)\geq 1-\xi,
\end{equation}
where $\|\cdot\|_{F}$ is the Frobenius norm.

Since $\rho$ is the maximum eigenvalue of $\frac{X^*X}{N}$, by the triangular inequality for the spectral norm
$$|\rho-\rho_E| <\|\frac{X^*X}{N}- \frac{E[X^*X]}{N}\|_2.$$

Since $$\|\frac{X^*X}{N}- \frac{E[X^*X]}{N}\|_2\leq \|\frac{X^*X}{N}- \frac{E[X^*X]}{N}\|_F,$$
we have
$$|\rho-\rho_E| <\|\frac{X^*X}{N}- \frac{E[X^*X]}{N}\|_F \leq \epsilon,$$
with probability at least $1-\xi$, as $N\rightarrow \infty$.

Using the triangular inequality for the spectral norm and the Frobenius norm, we have
$$\|\rho I -\frac{X^*X}{N}- (\rho_{E}I-\frac{E[X^*X]}{N})\|_2\leq 2\epsilon,$$
and
$$\|\rho I -\frac{X^*X}{N}- (\rho_{E}I-\frac{E[X^*X]}{N})\|_F\leq (\sqrt{T}+1)\epsilon,$$
with probability at least $1-\xi$, as $N\rightarrow \infty$.

Now since the Cholesky decomposition of  $(\rho I -\frac{X^*X}{N})$ is continuous at the point
$A=\rho_{E}I-\frac{E[X^*X]}{N}$,  for any $\epsilon>0$ and $\xi>0$, as $N$ goes to infinity,
$$\|R-\grave{R}\|_F \leq \epsilon$$
holds true with probability at least $1-\xi$. Thus as $N$ goes to infinity, for any full-length sequence $\widetilde{\s}^*$, with probability at least $1-\xi$,
$$|M_{\widetilde{\s}^*_{i:T}}^{\grave{R}}-M^{R}_{\widetilde{\s}^*_{i:T}}|=|\widetilde{\s}^*(R_{i:T}-\grave{R}_{i:T})\widetilde{\s}| \leq \|\widetilde{\s}\|^2 \|R-\grave{R}\|_F,$$
which is no bigger than $\|\widetilde{\s}\|^2 \epsilon$. Note here the superscripts $R$ and $R'$ in $M_{\widetilde{\s}^*_{i:T}}^{\grave{R}}-M^{R}_{\widetilde{\s}^*_{i:T}}$ describe what upper triangular matrix is used in calculating the metric.

Since we can take $\epsilon$ to be arbitrarily small, this means that, for a small enough $\epsilon$, the number of visited nodes per layer will also be equal to $|\Omega|$ under  $(\rho I -\frac{X^*X}{N})$, with probability at least $(1-\xi)$.
For a small enough constant $\epsilon>0$ and any constant $\xi>0$, as $N$ goes to infinity, the expected number of visited nodes at layer $i$  is upper bounded by
$$|\Omega|+(1-\xi) |\Omega|^{T-i},$$
since the largest number of visited nodes at layer $i$ when $r=\infty$  is $|\Omega|^{T-i}$.

If we take $\xi>0$ to be arbitrarily small, the expected number of visited nodes at layer $i$ will approach $|\Omega|$.

\end{proof}

\section{Simulation Results}
\label{sec:simulation}
 In this section, we simulate the performance and complexity of the exact ML algorithm for SIMO systems with a large number of receive antennas. We use the 4-QAM constant modulus constellation. Channel matrix entries are generated as i.i.d complex Gaussian random variables. We investigate the performance of the ML algorithm for $N$= $10$, $50$  $100$, and $500$ receive antennas. Two different data length values are examined, namely $T=8$ and $20$. We compare the performance of the joint ML non-coherent data detection algorithm with sub-optimal iterative and non-iterative channel estimation and data detection schemes.
We use least square (LS) and minimum mean square error (MMSE) channel estimation for the iterative and non-iterative detection schemes (the reader may refer to \cite{LS&MMSE} for the LS and MMSE channel estimation).

 We embed one symbol which is known by the receiver to resolve the phase ambiguity of the channel, at layer $T$ of the data sequence. In the non-iterative channel estimation case, the receiver estimates the channel vector using this training symbol. Then, the receiver uses this estimated channel vector  to detect the remaining $T-1$ transmitted symbols. The iterative channel estimation scheme exploits the detected data vector from the pervious iteration to obtain a new channel estimation, which, in turn, is used for data detection in the current iteration.  The iterative joint channel estimation and data detection scheme runs 100 iterations for each channel coherence block.

 In Figures \ref{LS8}, \ref{LS20} \ref{MMSE8} and \ref{MMSE20}, the symbol error rate (SER) of the ML algorithm has been evaluated as a function of SNR for $T=8$ and $20$ respectively, along with the SER of data detection based on the iterative and non-iterative LS and MMSE channel estimations. It can be seen that the ML algorithm outperforms the LS and MMSE iterative and non-iterative channel estimation schemes. For example, from Figures \ref{LS8} and \ref{MMSE8}, we see more than 2 dB improvement over the iterative channel estimation, and 3 dB improvement over the non-iterative channel estimation and data detection for $N=100$, at $10^{-2}$ SER. In Figures \ref{LS20} and \ref{MMSE20}, the ML detector provides a performance improvement of 2 dB over the iterative scheme and 4.5 dB improvement over the non-iterative scheme,at $10^{-2}$ SER.  One can notice  the improvement in the performance of ML channel estimation and data detection when we increase the number of receive antennas. From Figure \ref{MMSE20}, there is 2dB improvement SER for $N=100$ compared with $N=50$, whereas it is 7 dB using $N=100$ compared with $N=10$, at $10^{-1}$ SER.

 The complexity of the ML algorithm is evaluated based on the average number of nodes which are visited each layer during the algorithm execution.  In Figure \ref{proposed radius} we obtain the average number of visited nodes per layer for $T=20$ , $N=100$, and $N=500$ at SNR=-2dB.  This experiment is for the  4-QAM constellation,  using our proposed search radius satisfying $r^2=\frac{T}{8}$. For $N=100$, the average number of visited nodes per layer is already very low.  When $N=500$, the number of average visited nodes per layer is steady at a constant number, namely 4.  When the number of receive antennas goes from  $N=100$ to $N=500$, the simulation results show a clear reduction in the tree search complexity. Also, under a sufficiently large number of receive antennas, on average the joint ML algorithm will visit each layer 4 times, which is equal to the cardinality of the 4-QAM constellation. This is consistent with the theoretical prediction of our Theorem \ref{thm:ltt1}.

\section{Summary and Discussion}
\label{sec:conclusion} This paper shows, for the first time, the performance of joint ML channel estimation and data detection algorithm of massive SIMO wireless systems. We have shown that, as the number of receive antennas grows to infinity, the number of visited nodes per layer reaches a constant. Simulation results show that ML algorithm has better performance than iterative and non-iterative LS and MMSE channel estimation schemes. In addition, our simulation results verify our theorem by showing that the number of visited points per layer is equal to a constant number as  the number of receive antennas is sufficiently large.

\section{Appendix}

\begin{lemma}
Let $\s^*$ be the transmitted data sequence. Let us consider using $\rho_{E}I-\frac{E[X^*X]}{N}$ for calculating the sequence metric.  For any $ \widetilde{\s}^*$ such that $\widetilde{\s}^*\neq \s^*$, $M_{\widetilde{\s}^*}\neq 0$.
\label{thm:ltt2}
\end{lemma}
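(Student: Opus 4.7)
The plan is to compute the matrix $A = \rho_{E}I - E[X^*X]/N$ explicitly, observe that it has a rank-one structure, and identify its kernel. From the formula for $E[X^*X]/N$ given earlier in the excerpt, we have $E[X^*X]/N = \s\s^* + \sigma_{w}^{2} I$, where $\s$ is viewed as the $T\times 1$ column vector of transmitted symbols. Combining this with $\rho_{E} = T + \sigma_{w}^{2}$ yields the simple form $A = TI - \s\s^*$. This reduction is the crucial first step; everything else is an application of standard linear algebra to this rank-one perturbation of the identity.

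First I would compute $M_{\widetilde{\s}^*}$ directly. Because $\Omega$ is a constant-modulus constellation with unit energy, every $\s,\widetilde{\s}\in\Omega^{T}$ satisfies $\|\s\|^{2} = \|\widetilde{\s}\|^{2} = T$, so
\[
M_{\widetilde{\s}^*} \;=\; \widetilde{\s}^{*}A\widetilde{\s} \;=\; T\|\widetilde{\s}\|^{2} - \widetilde{\s}^{*}\s\s^{*}\widetilde{\s} \;=\; T^{2} - |\widetilde{\s}^{*}\s|^{2}.
\]
The claim $M_{\widetilde{\s}^*}\neq 0$ is therefore equivalent to $|\widetilde{\s}^{*}\s| < T$.

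Next I would invoke Cauchy--Schwarz: $|\widetilde{\s}^{*}\s| \leq \|\widetilde{\s}\|\|\s\| = T$, with equality if and only if $\widetilde{\s}$ is a scalar multiple of $\s$. Hence $M_{\widetilde{\s}^*} = 0$ would force $\widetilde{\s} = c\s$ for some $c\in\mathcal{C}$, and the matching-norm condition $\|\widetilde{\s}\|^{2} = \|\s\|^{2}$ further forces $|c|=1$. This is where the pilot enters: the algorithm pins the last entry $\s_{T}$ to a known value from $\Omega$, so the search is restricted to $\widetilde{\s}$ with $\widetilde{\s}_{T} = \s_{T}$. Since $|\s_{T}|=1\neq 0$, the relation $\widetilde{\s}_{T} = c\s_{T}$ forces $c=1$, i.e.\ $\widetilde{\s} = \s$, contradicting $\widetilde{\s}\neq\s$. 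Combined with the positive semidefiniteness of $A$ already established in the excerpt, we conclude $M_{\widetilde{\s}^*} > 0$.

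The main obstacle I expect is the last step. Without the pinned pilot symbol, the kernel of $A$ would contain every unit-modulus multiple $c\s$, and for QPSK in particular the four multiples $\{\pm\s,\pm i\s\}$ all lie in $\Omega^{T}$ and would yield $M=0$. The lemma therefore is not a pure linear-algebra fact about $A$ alone; it relies essentially on the phase-ambiguity-resolving convention introduced in the algorithm description. I would make this assumption explicit in the proof, since it is precisely what upgrades the Cauchy--Schwarz equality case into the desired uniqueness statement.
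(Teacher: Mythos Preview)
Your argument is correct and more elementary than the paper's. You compute $M_{\widetilde{\s}^*}=\widetilde{\s}^*(TI-\s\s^*)\widetilde{\s}=T^2-|\widetilde{\s}^*\s|^2$ directly and finish with Cauchy--Schwarz plus the pinned pilot at index $T$. The paper instead works through the explicit Cholesky factor $\grave{R}$: it locates the largest index $i\leq T-1$ where $\widetilde{\s}_i\neq\s_i$, uses $\sum_{k=i}^T L_{i,k}\s_k=0$ to reduce the layer-$i$ increment to $|L_{i,i}(\widetilde{\s}_i-\s_i)|^2$, and then invokes a separate lemma computing $L_{i,i}=\sqrt{T-T/(T-i+1)}\neq 0$ for $i<T$. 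Your route avoids the telescoping computation of the diagonal Cholesky entries entirely and makes the role of the pilot symbol transparent. The price is that the paper's argument actually establishes the stronger statement $M_{\widetilde{\s}^*_{i:T}}>0$ for the \emph{partial} metric at the first layer of disagreement, which is what Theorem~\ref{thm:ltt1} cites when bounding the number of surviving tree nodes per layer; your Cauchy--Schwarz computation controls only the full-length metric $M_{\widetilde{\s}^*_{1:T}}$, so as stated it proves the lemma but does not immediately supply that finer layer-wise conclusion.
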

\begin{proof}[Proof of Theorem\ref{thm:ltt2}]
For any $ \widetilde{\s}^* \neq \s^*$, let $i$ be the closest integer to $T$ such that $\s^*_i \neq \widetilde{\s}^*_{i}$, where $1 \leq i \leq T-1$.  Then we can find the metric of $\widetilde{\s}^*_{i:T}$ based on (\ref{eq:Mertici1})
\begin{align}\label{eq:Mertici2}
M_{\widetilde{\s}^*_{i:T}} &= \|\sum^{T}_{k=i} L_{i,k} \widetilde{\s}_k\|^2+M_{\widetilde{\s}^*_{i+1:T}}\notag\\
  &=\|\sum_{k=i+1}^{T} L_{i,k} \s_k+L_{i,i} \widetilde{\s}_i\|^2,\notag\\
\end{align}
where $\widetilde{\s}^*_{i+1:T}=\s^*_{i+1:T}$, and $M_{\widetilde{\s}^*_{i+1:T}}=M_{{\s}^*_{i+1:T}}=0$ as proved in Theorem \ref{thm:ltt1}. Now we can write (\ref{eq:Mertici2}) as
\begin{align}
M_{\widetilde{\s}^*_{i:T}} &= \|\sum_{k=i}^{T} L_{i,k} \s_k- L_{i,i} {\s}_i+ L_{i,i} \widetilde{\s}_i\|^2\notag\\
  &=\|- L_{i,i} {\s}_i+ L_{i,i} \widetilde{\s}_i\|^2\notag\\
  &=\|L_{i,i} ({\widetilde{\s}_i-\s}_i)\|^2,\notag\\ \nonumber
\end{align}
where we have used the fact that $\sum_{k=i}^{T} L_{i,k} \s_k=0$,  proved in the proof of Theorem 3.1. Since ${\widetilde{\s}_i-\s}_i \neq 0$ by assumption, and $L_{i,i}\neq 0$ for $i\neq T$ according to Lemma \ref{thm:ltt3}, $M_{\widetilde{\s}^*_{i:T}}$ will not be zero as well.
\end{proof}

\begin{lemma}
$L_{i,i}\neq0$ for any $1\leq i \leq T-1$, and $L_{T,T}$ is equal to zero.
\label{thm:ltt3}
\end{lemma}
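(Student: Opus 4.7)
The plan is to read off the closed-form expression for the diagonal entries $L_{i,i}$ of $\grave{R}$ already written down in the proof of Theorem \ref{thm:ltt1}, and then evaluate it in closed form using a telescoping trick. Once I have $L_{i,i}$ explicitly, both assertions (positivity for $i \leq T-1$ and vanishing at $i = T$) will drop out immediately.

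First I would recall from the proof of Theorem \ref{thm:ltt1} that, for $1 < i \leq T$,
$$L_{i,i}^2 = (T-1) - \sum_{j=1}^{i-1} \frac{T}{(T-j+1)(T-j)},$$
with $L_{1,1}^2 = T-1$. The key observation is the partial-fraction identity
$$\frac{T}{(T-j+1)(T-j)} = T\left(\frac{1}{T-j} - \frac{1}{T-j+1}\right),$$
which turns the sum into a telescope; summing over $j = 1, \dots, i-1$ collapses it to $T\bigl(\tfrac{1}{T-i+1} - \tfrac{1}{T}\bigr) = \tfrac{T}{T-i+1} - 1$.

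Substituting back I would obtain the clean identity
$$L_{i,i}^2 = T - \frac{T}{T-i+1} = \frac{T(T-i)}{T-i+1},$$
which is also consistent at $i = 1$ since then the right-hand side equals $T-1$. From this formula the lemma drops out: for $1 \leq i \leq T-1$ the numerator $T-i$ is a positive integer and the denominator is positive, so $L_{i,i} > 0$; for $i = T$ the numerator vanishes, forcing $L_{T,T} = 0$.

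There is no real obstacle here; the only care required is in verifying the partial-fraction decomposition and tracking the summation indices correctly. As a sanity check, the equality $L_{T,T} = 0$ is structurally consistent with the observation, exploited in the proof of Theorem \ref{thm:ltt1}, that $\rho_{E} I - E[X^*X]/N$ has $\s/\|\s\|$ in its null space and is otherwise positive definite, which forces exactly one diagonal entry of its Cholesky factor to vanish (and by the natural ordering this must be the last one).
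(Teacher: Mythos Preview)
Your proposal is correct and essentially identical to the paper's own proof: both start from the closed-form $L_{i,i}^2=(T-1)-\sum_{j=1}^{i-1}\frac{T}{(T-j+1)(T-j)}$, apply the same partial-fraction telescoping to obtain $L_{i,i}^2=T-\frac{T}{T-i+1}$, and read off positivity for $i\leq T-1$ and vanishing at $i=T$. Your additional null-space sanity check is a nice consistency remark not present in the paper.
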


\begin{proof}[Proof of lemma \ref{thm:ltt3}]
$L_{ii}$ can be written as
\begin{align}\label{Livalue}
		L_{i,i}&=\sqrt {(T-1)-\sum_{j=1}^{i-1} \frac{T}{(T-(j-1))(T-j)}}\notag\\
			   &=\sqrt {(T-1)+\sum_{j=1}^{i-1} \left( \frac{T}{(T-(j-1))}-\frac{T}{(T-j)}\right)} \notag\\
               &=\sqrt{T-\frac{T}{T-(i-1)}}.\notag\\
\end{align}
 When $i=T$, (\ref{Livalue}) will be
\begin{align}
		L_{i,i}&=\sqrt{T-\frac{T}{T-(T-1)}}\notag\\
			   &=0.\notag\\\nonumber
\end{align}
It is also obvious that $L_{i,i}\neq0$ for any $i<$T.

\end{proof}

\begin{figure}[!htb]
\centering
\includegraphics[width =3.0 in]{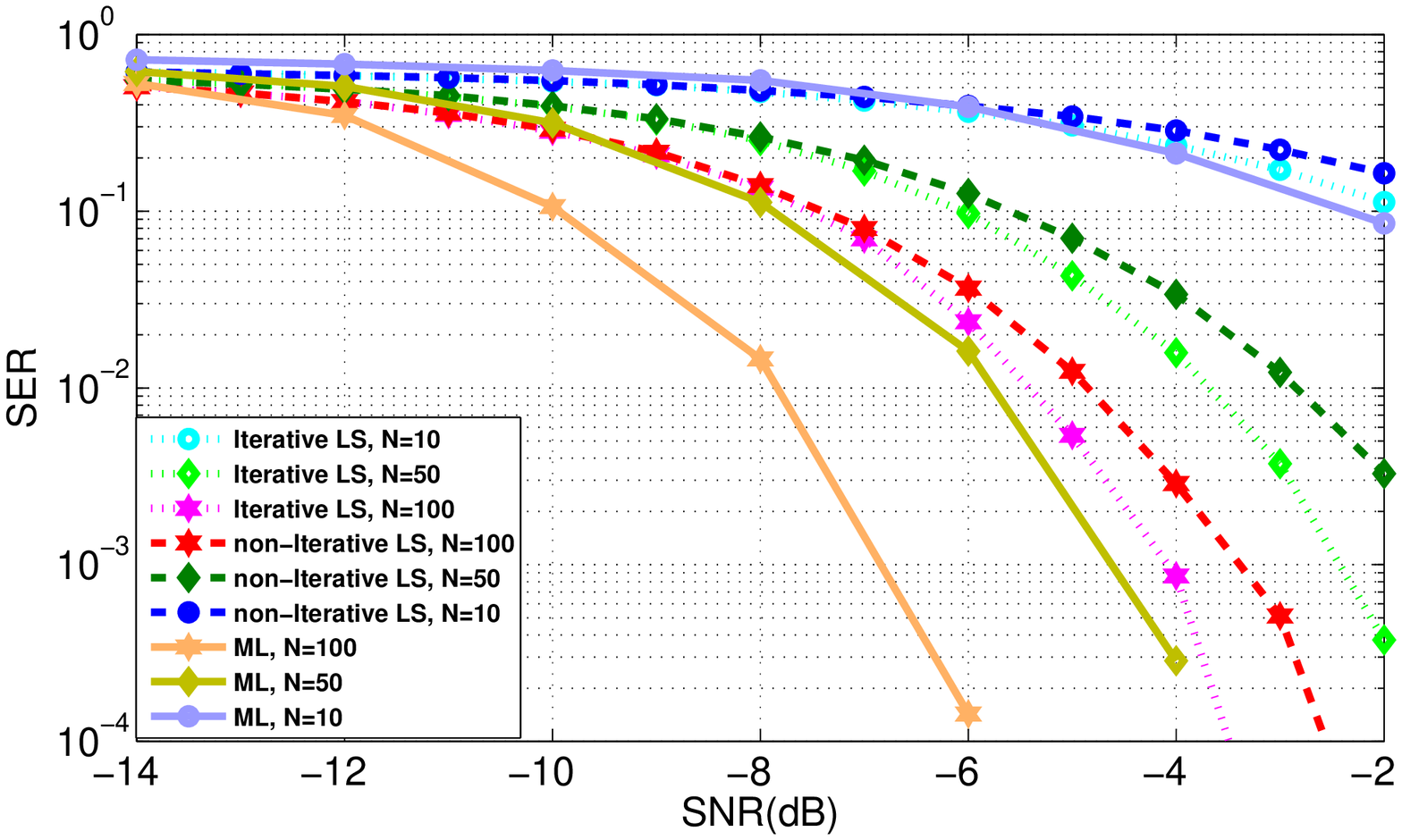}
\caption{SER vs SNR for joint ML channel estimation and data detection, iterative and non-iterative LS channel estimation for $T=8$}
\label{LS8}
\end{figure}

\begin{figure}[!htb]
\centering
\includegraphics[width =3.0 in]{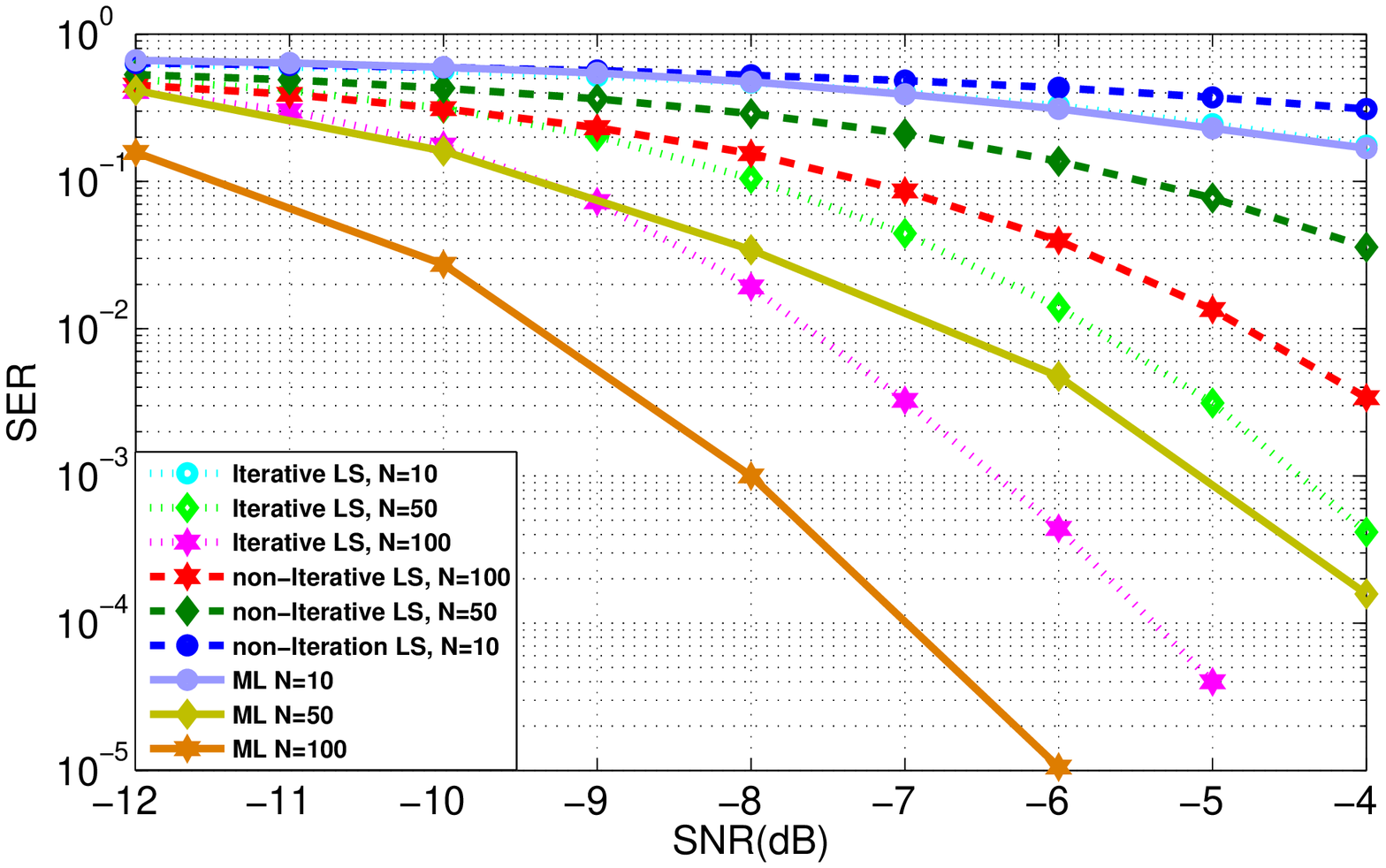}
\caption{SER vs SNR for joint ML channel estimation and data detection, iterative and non-iterative LS channel estimation for $T=20$}
\label{LS20}
\end{figure}

\begin{figure}[!htb]
\centering
\includegraphics[width =3.0 in]{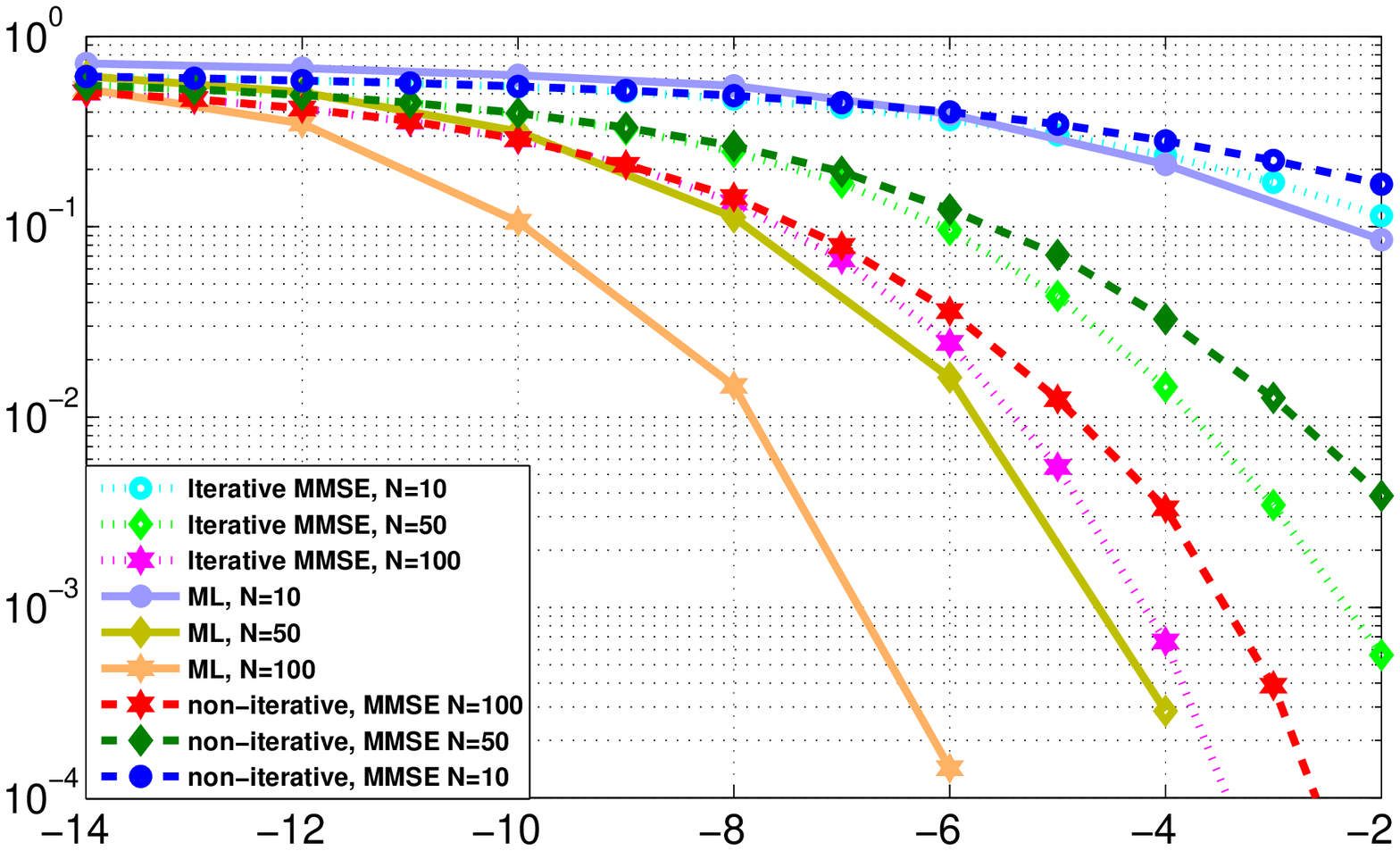}
\caption{SER vs SNR for joint ML channel estimation and data detection, iterative and non-iterative MMSE channel estimation for $T=8$} \label{MMSE8}
\end{figure}

\begin{figure}[!htb]
\centering
\includegraphics[width =3.0 in]{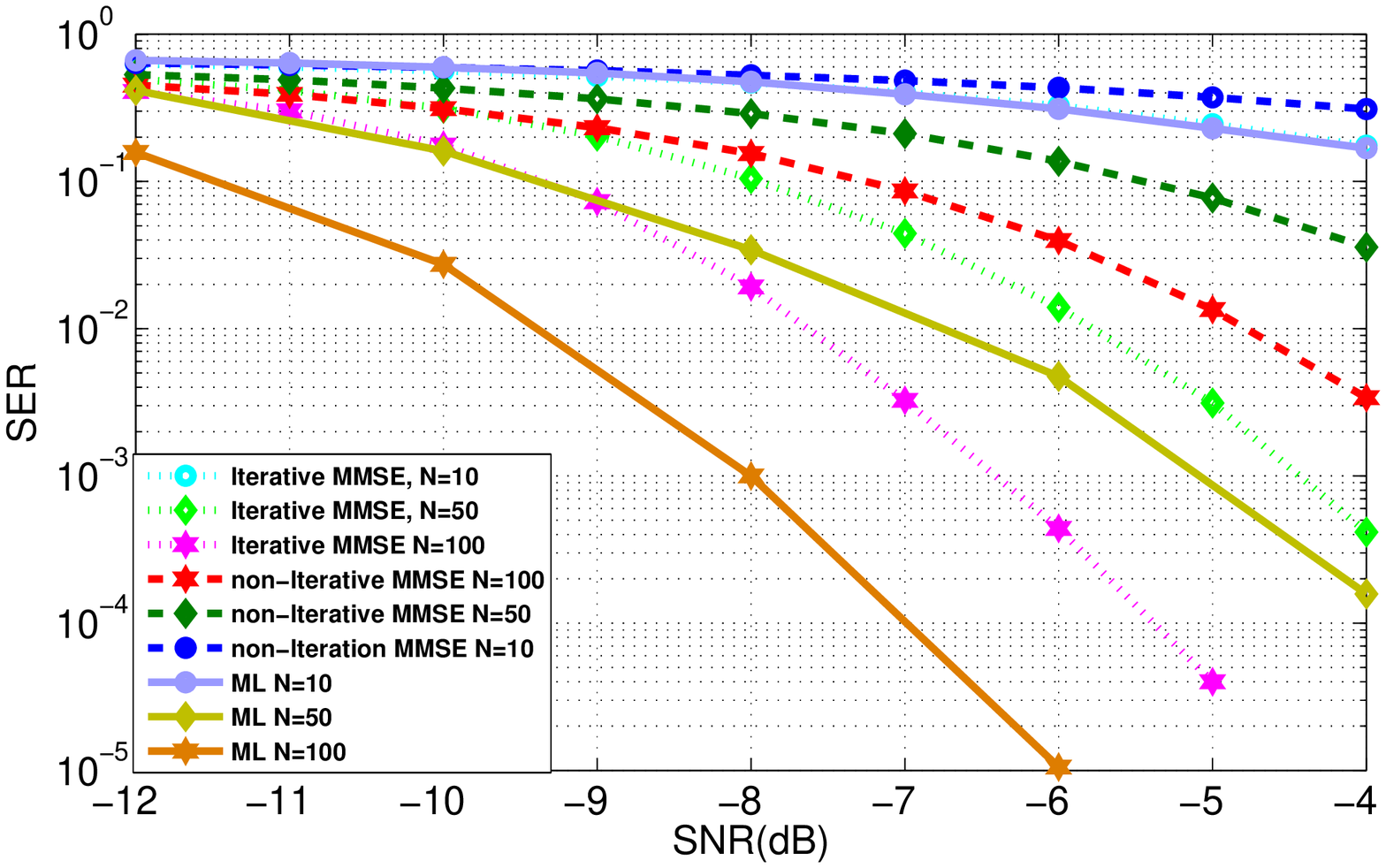}
\caption{SER vs SNR for joint ML channel estimation and data detection, iterative and non-iterative MMSE channel estimation for $T=20$} \label{MMSE20}
\end{figure}

\begin{figure}[!htb]
  \centering
  \includegraphics[width =3.0 in]{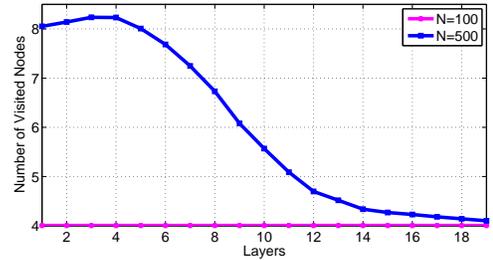}\\
  \caption{The average number of visited points per layer for $N = 100~\text{and}~500$, $T$=20 and SNR=-2dB. }\label{proposed radius}
\end{figure}


\end{document}